\DeclareMathAlphabet{\pazocal}{OMS}{zplm}{m}{n}
\declaretheorem[numberwithin=section,refname={Theorem,Theorems},Refname={Theorem,Theorems},name={Theorem}]{theorem}
\declaretheorem[numberlike=theorem,refname={Lemma,Lemmas},Refname={Lemma,Lemmas},name={Lemma}]{lemma}
\declaretheorem[numberlike=theorem,refname={Definition,Definitions},Refname={Definition,Definitions},name={Definition}]{definition}
\newcommand{\dmax}{\mathit{d_{\textrm{max}}}}
\newcommand{\dmin}{\mathit{d_{\textrm{min}}}}
\newcommand{\Z}{\mathbb{Z}}
\newcommand{\opt}{\textsc{OPT}}
\newcommand{\poly}{\textrm{poly}}
\newcommand{\dd}{\kappa}
\newcommand{\ball}[2]{B\!\left(#1,#2\right)}
\newcommand{\ms}{M}
\newcommand{\OPT}{\textsc{OPT}}
\newcommand{\cost}{\mathit{cost}}
\newcommand{\eps}{\epsilon}
\newcommand{\ar}{\Delta}
\newcommand{\hierarchy}{\Gamma}
\newcommand{\rpmax}{r^p_{\max}}
\newcommand{\rpmin}{r^p_{\min}}
\newcommand{\AChan}{\mathcal{A}_\mathrm{CGS}}
\newcommand{\ACov}{\mathcal{A}_\mathrm{Cov}}
\newcommand{\logDloglogD}{\log \Delta \log\log\Delta}
\title{Fully Dynamic \texorpdfstring{$k$-Center}{k-Center} Clustering in Doubling Metrics\thanks{The research leading to these results has received funding from the European Research Council under the European Union's Seventh Framework Programme (FP/2007-2013) / ERC Grant Agreement no. 340506.}}
\author{
	Gramoz Goranci\thanks{
		University of Toronto, Canada} 
	\and 
	Monika Henzinger\thanks{
		University of Vienna, Austria}
	\and
	Dariusz Leniowski \thanks{
		University of Vienna, Austria}
	\and
	Christian Schulz \thanks{
		University of Vienna, Austria}
	\and \\[0.05cm]
	Alexander Svozil\thanks{
		University of Vienna, Austria}	
}
\date{}
\begin{document}

	\maketitle

	\begin{abstract}
	
Clustering is one of the most fundamental problems in unsupervised learning with a large number of
		applications. However, classical clustering algorithms assume that the data is static, thus failing
		to capture many real-world applications where data is constantly changing and evolving. Driven by
		this, we study the metric $k$-center clustering problem in the fully dynamic setting, where the goal
		is to efficiently maintain a clustering while supporting an intermixed sequence of insertions and
		deletions of points. This model also supports queries of the form (1) report whether a given point
		is a center or (2) determine the cluster a point is assigned to. 
		
		We present a deterministic dynamic
		algorithm for the $k$-center clustering problem that provably achieves a $(2+\epsilon)$-approximation
		in poly-logarithmic update and query time, if the underlying metric has bounded doubling dimension,
		its aspect ratio is bounded by a polynomial and $\epsilon$ is a constant. An important feature of
		our algorithm is that the update and query times are independent of $k$. We confirm the practical
		relevance of this feature via an extensive experimental study which shows that for 
		values of $k$ and $\epsilon$ suggested by theory, 
		our algorithmic construction outperforms the state-of-the-art
		algorithm in terms of solution quality and running time.

	\end{abstract}
	\newpage
	\section{Introduction}\label{sec:intro}
	The massive increase in the amount of data produced over the last few decades has motivated the study of different tools for analysing and computing specific properties of the data. One of the most extensively studied analytical tool is clustering, where the goal is to group the data into clusters of ``close'' data points. Clustering is a fundamental problem in computer science and it has found a wide range of applications in unsupervised learning, classification, community detection, image segmentation and databases (see e.g.~\cite{FORTUNATO201075,Schaeffer07,ShiM00}). 
	
	A natural definition of clustering is the \emph{$k$-center clustering}, where given a set of $n$
	points in a metric space and a parameter $k \leq n$, the goal is to select  $k$ designated
	points, referred to as \emph{centers}, such that their \emph{cost}, defined as the maximum
	distance of any point to its closest center, is minimized. As finding the optimal $k$-center
	clustering is NP-hard~\cite{KH79}, the focus has been on studying the approximate version of
	this problem. For a parameter $\alpha \geq 1$, an $\alpha$-approximation to the $k$-center
	clustering problem is an algorithm that outputs $k$ centers such that their cost is within
	$\alpha$ times the cost of the optimal solution. There is a simple $2$-approximate $k$-center
	clustering algorithm by Gonzalez~\cite{G85} that runs in $O(nk)$ time; repeatedly pick the point furthest away from the current set of centers as the next center to be added. The problem of finding a $(2-\epsilon)$-approximate $k$-center clustering is known to be NP-complete~\cite{G85}.

	In many real-world applications, including social networks and the Internet, the data is subject to frequent updates over time. For example, every second about thousands of Google searches, YouTube video uploads and Twitter posts are generated. However, most of the traditional clustering algorithms are not capable of capturing the dynamic nature of data and often completely reclustering from scratch is used to obtain desirable clustering guarantees. 
	
	To address the above challenges, in this paper we study a \emph{dynamic} variant of the
	$k$-center clustering problem, where the goal is to maintain a clustering with small
	approximation ratio while supporting an intermixed update sequence of insertions and deletions
	of points with a small time per update. Additionally, for any given point we want to report
	whether this point is a center or determine the cluster this point is assigned to. When only
	insertions of points are allowed, also known as the \emph{incremental} setting, Charikar et al.~\cite{CharikarCFM04} designed an $8$-approximation algorithm with $O(k \log k)$ amortized
	time per point insertion. This result was later improved to a $(2 + \epsilon)$-approximation by McCutchen and Khuller~\cite{MK08}.  Recently, Chan et al.~\cite{ChanGS18} studied the model
	that supports both point insertions and deletions, referred to as the \emph{fully-dynamic}
	setting. Their dynamic algorithm is randomized and achieves a $(2+\epsilon)$-approximation with
	$O(k^{2} \cdot \epsilon^{-1} \cdot \log \Delta)$ update time per operation, where $\Delta$ is
	the aspect ratio of the underlying metric space.   
	
	
It is an open question whether there are fully-dynamic algorithms that achieve smaller running time (ideally independent of $k$) while still keeping the same approximation guarantee. We study such data structures for metrics spaces with ``limited expansion''. More specifically we consider the well-studied notion of doubling dimension. The \emph{doubling dimension} of a metric space is bounded by $\kappa$ if any ball of radius $r$ in this metric can be covered by $2^{\kappa}$ balls of radius $r/2$~\cite{KrauthgamerL04}. This notion can be thought of as a generalization of the Euclidean dimension since $\mathbb{R}^{d}$ has doubling dimension $\Theta(d)$.
	
	The $k$-center clustering problem has been studied in the low dimensional regime from both the
	static and dynamic perspective. Feder and Greene~\cite{FederG88} showed that if the input points
	are taken from $\mathbb{R}^{d}$, there is a $2$-approximation to the optimal clustering that can
	be implemented in $O(n \log k)$ time. They also showed that computing an approximation better
	than $1.732$ is NP-hard, even when restricted to Euclidean spaces. For metrics of bounded
	doubling dimension Har{-}Peled and Mendel~\cite{Har-PeledM06} devised an algorithm that achieves a $2$-approximation
	and runs in $O(n \log n)$ time. In the dynamic setting, Har{-}Peled~\cite{Har-Peled04} implicitly gave a fully-dynamic algorithm for metrics with bounded doubling dimension that reports a $(2+\epsilon)$-clustering at any time while supporting insertion and deletions of points in $O(\poly(k,\epsilon^{-1}, \log n))$ time, where $\poly(\cdot)$ is a fixed-degree polynomial in the input parameters.
	
	One drawback shared by the above dynamic algorithms for the $k$-center clustering is that the
	update time is dependent on the number of centers $k$. This is particularly undesirable in the
	applications where $k$ is relatively large. For example, one application where this is justified
	is the distribution of servers on the Internet, where thousands of servers are heading towards
	millions of routers. Moreover, this dependency on $k$ seems inherent in the state-of-the-art
	dynamic algorithms; for example, the algorithm due to Chan et al.~\cite{ChanGS18} requires
	examining the set of current centers upon insertion of a point, while the algorithm due
	to Har{-}Peled~\cite{Har-Peled04} employs the notion of \emph{coresets}, which in turn requires dependency on the number of centers.  
	

	In this paper we present a dynamic algorithm for metrics with bounded doubling dimension that achieves a $(2+\epsilon)$ approximation ratio for the $k$-center clustering problem~(thus matching the approximation ratio of the dynamic algorithm in general metric spaces~\cite{ChanGS18}) while supporting insertions and deletion of points in time \emph{independent} of the number of centers $k$ and poly-logarithmic in the aspect ratio $\Delta$. Our algorithm is  deterministic and thus works against an adaptive adversary.


\begin{theorem} \label{thm: mainThm1}
		There is a fully-dynamic algorithm for the $k$-center clustering problem, where points are
		taken from a metric space with doubling dimension $\dd$, such that any time the cost of the
		maintained solution is within a factor $(2+\epsilon)$ to the cost of the optimal solution
		and the insertions and deletions of points are supported in $O(2^{O(\kappa)}\logDloglogD \cdot \epsilon^{-1}\ln
		\eps^{-1})$ update time. For any given point, queries about whether this point
		is a center or reporting the cluster this point is assigned to can be answered in $O(1)$ and
		$O(\log \Delta)$, respectively.
	\end{theorem}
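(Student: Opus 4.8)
The plan is to build a hierarchical net structure, akin to the navigating nets / net-trees of Krauthgamer--Lee and Har-Peled--Mendel, that simultaneously discretizes the candidate radii and the candidate center locations, and then to run Gonzalez's greedy $2$-approximation on each discretized scale. Concretely, I would maintain for each level $i \in \{\minR, \ldots, \maxR\}$ (so $O(\log\Delta)$ levels, since the aspect ratio is polynomial) a $2^i$-net $\net_i$ of the current point set, where each net point at level $i$ has a parent at level $i+1$ within distance $O(2^{i+1})$, and each point of the metric is within distance $O(2^i)$ of some net point at level $i$. The doubling property guarantees that every net point at level $i$ has only $2^{O(\kappa)}$ descendants-within-a-ball of any fixed radius proportional to $2^i$, which is what will keep all the local neighbourhood computations independent of $k$ and of $n$.

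The key steps, in order, are as follows. (1) Show that a correct $2^i$-net yields, via Gonzalez on the net restricted to the ``active'' portion, a set of at most $k$ centers whose cost is at most $2 \cdot 2^i \cdot (1 + O(\eps))$ whenever the optimum at the current scale is about $2^i$; combined with a binary/linear search over the $O(\log\Delta)$ levels for the smallest $i$ at which $k$ centers suffice, this gives the $(2+\eps)$ approximation, up to rescaling the grid of radii by a $(1+\eps)$ factor so the discretization only costs an extra $(1+\eps)$ — this explains the $\eps^{-1}\ln\eps^{-1}$ factor, which should come from using $O(\eps^{-1}\ln\eps^{-1})$ geometrically-spaced radius guesses rather than powers of $2$. (2) Describe the update operation: on insertion of a point $x$, walk down the hierarchy from the top, at each level finding the $2^{O(\kappa)}$ net points within the relevant radius (using the parent pointers and the packing bound), and either attach $x$ to an existing net point or promote $x$ to be a new net point at that level and all finer levels; deletion is symmetric but requires, when a deleted point was a net point, finding a replacement among the $2^{O(\kappa)}$ nearby points at the next finer level and re-parenting the affected children. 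Each such step touches $2^{O(\kappa)}$ points per level and $O(\log\Delta)$ levels, giving the $2^{O(\kappa)}\log\Delta$ part of the bound; the extra $\log\log\Delta$ is presumably from maintaining, per level, a priority-queue-like structure over radii or over the Gonzalez ordering so that the ``smallest feasible level'' can be recomputed after an update. (3) Argue the query bounds: a center-membership query is $O(1)$ by storing a flag on each point, and a cluster-assignment query walks up/down $O(\log\Delta)$ levels to locate the responsible center.

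The main obstacle, I expect, is making the \emph{dynamic} maintenance of the Gonzalez solution on each level both correct and fast after an arbitrary point deletion: Gonzalez's greedy is inherently sequential and deleting an early-chosen center can cascade, so one cannot literally re-run it. I would instead prove that it suffices to maintain, at each level $i$, \emph{some} maximal $2^i$-separated subset of the active net points (not necessarily the greedy one) together with the induced clustering — a maximal separated set of size $\le k$ already certifies cost $\le 2\cdot 2^i$ by a standard packing argument against the optimum — and then show that such a maximal separated set can be repaired in $2^{O(\kappa)}$ work per level after an insertion or deletion, because only points within distance $2\cdot 2^i$ of the changed point can have their membership status flip, and there are only $2^{O(\kappa)}$ of those. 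Tying this together with the hierarchy so that the repairs propagate consistently across all $O(\log\Delta)$ levels, and then bounding the time to find the new threshold level, is where the careful bookkeeping (and the $\log\log\Delta$ and $\eps^{-1}\ln\eps^{-1}$ factors) will have to be paid for. A secondary subtlety is ensuring determinism — all tie-breaking must be by a fixed rule (e.g.\ point identifiers) so the structure is a deterministic function of the current point set and the update history in a way that defeats an adaptive adversary.
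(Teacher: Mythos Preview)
Your high-level skeleton --- maintain a hierarchy of $r$-nets in the style of Krauthgamer--Lee, use the doubling property to bound local work by $2^{O(\kappa)}$, and output the net at the smallest scale whose size is at most $k$ --- is exactly the paper's approach. However, several of the details you supply are misdiagnosed, and one of them matters for the approximation ratio.

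First, the detour through Gonzalez is a red herring. The paper never runs a greedy algorithm on any level: the $r$-net $Y_r$ \emph{is} the set of centers, directly. The ``main obstacle'' you identify (that Gonzalez is sequential and hard to repair dynamically) therefore does not arise, and your eventual fallback --- maintain \emph{some} maximal $r$-separated covering set --- is not a workaround but the actual algorithm from the start. The repair after a deletion is handled entirely by the Krauthgamer--Lee navigating-net machinery, which the paper invokes as a black box; you are essentially re-deriving it.

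Second, and more substantively, your account of the $\epsilon^{-1}\ln\epsilon^{-1}$ factor is wrong in a way that would break the $(2+\epsilon)$ guarantee. Using a single hierarchy with radii $(1+\epsilon)^j$ does \emph{not} give a $(2+\epsilon)$-approximation: with base $\alpha$ the covering radius of $Y_{\alpha^i}$ over the whole point set is $\alpha^i\cdot\frac{\alpha}{\alpha-1}$ (a geometric series down the levels), while the lower bound on $\OPT$ from the separated set one level below is only $\alpha^{i-1}/2$, so the ratio is $\frac{2\alpha^2}{\alpha-1}$, which is never close to $2$. The paper instead uses the McCutchen--Khuller trick: take $\alpha$ \emph{large} (specifically $\alpha=2/\epsilon$, so that $\frac{\alpha}{\alpha-1}\le 1+\epsilon$ and the tail of the geometric series is negligible), and then maintain $m=O(\epsilon^{-1}\ln\epsilon^{-1})$ \emph{parallel} navigating nets with offset scale sets $\Gamma^p=\{\alpha^{i+p/m-1}:i\in\Z_+\}$, so that across all $m$ copies the available thresholds are spaced by a factor $\alpha^{1/m}\le 1+\epsilon$. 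Both ingredients are needed simultaneously; a fine radius grid alone does not suffice.

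Third, the $\log\log\Delta$ factor does not come from a priority queue over levels for locating the threshold scale. It is inherited verbatim from the Krauthgamer--Lee update bound (their Theorem~2.5) for a single navigating net. Finding the threshold scale $i^*$ after an update is trivial: maintain a counter $|Y^p_r|$ per level, and since an update changes these counters on at most $O(\log\Delta)$ levels, the smallest feasible level can be tracked in $O(1)$ extra work per level touched.
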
 
	
We perform an extensive experimental study of a variant of our algorithm, where we replace navigating nets with the closely related notion of \emph{cover trees}, one of the pioneering data-structures for fast nearest-neighbour search~\cite{BKL2006,Kollar2006FASTNN}. We compare our results against the implementation of~\cite{ChanGS18}, which is the state-of-the-algorithm for the problem. Our findings indicate that our algorithm has a significant advantage in running time and solution quality for larger values of $k$ and $\eps$, as suggested by our theoretical results. 

	
   

	\paragraph*{Related work.}
	For an in-depth overview of clustering and its wide applicability we refer the reader to two
	excellent surveys~\cite{Schaeffer07,HansenJ97}. Here we briefly discuss closely related variants
	of the $k$-center clustering problem such as the kinetic and the streaming model. In the kinetic
	setting, the goal is to efficiently maintain a clustering under the continuous motion of the
	data points. Gao et al.~\cite{GaoGN06} showed an algorithm that achieves an $8$-approximation factor.
	Their result was subsequently improved to a $(4+ \epsilon)$ guarantee by Friedler and Mount~\cite{FriedlerM10}. In
	the streaming setting Cohen-Addad et al.~\cite{Cohen-AddadSS16} designed a $(6+\epsilon)$-approximation algorithm with an expected update time of $O(k^{2} \cdot \epsilon^{-1} \cdot \log \Delta)$. However, their algorithm only works in the sliding window model and does not support arbitrary insertions and deletions of points. \cite{Kale19} studied streaming algorithms for a generalization of the $k$-center clustering problem, known as the Matroid Center.
	
Recently and independently of our work, Schmidt and Sohler~\cite{SchmidtS19} gave an $16$-approximate fully-dynamic algorithm for the \emph{hierarchical} $k$-center clustering with $O(\log \Delta \log n)$ and $O(\log^{2} \Delta \log n)$ expected amortized insertion and deletion time, respectively, and $O(\log \Delta + \log n)$ query time, where points come from the discrete space $\{1,\ldots,\Delta \}^{d}$ with $d$ being a constant. This result implies a dynamic algorithm for the $k$-center clustering problem with the same guarantees. In comparison with our result, our algorithm (i) achieves a better and an almost tight approximation, (ii) is deterministic and maintains comparable running time guarantees, and (iii) applies to any metric with bounded doubling dimension. For variants of facility location and $k$-means clustering, Cohen-Addad et al.~\cite{Cohen-AddadHPSS19} obtained fully dynamic algorithms with non-trivial running time and approximation guarantees for general metric spaces.  	
	
	There has been growing interest in designing provably dynamic algorithms for graph clustering problems with different objectives. Two recent examples include works on dynamically maintaining expander decompositions~\cite{SaranurakW19} and low-diameter decompositions~\cite{ForsterG19}. For applications of such algorithms we refer the reader to these papers and the references therein.

	\paragraph*{Technical overview.} In the static setting, a well-known approach for designing
	approximation algorithms for the $k$-center clustering problem is exploiting the notion of
	$r$-nets. Given a metric space $(M,d)$, and an integer parameter $r \geq 0$, an \emph{$r$-net} $Y_r$ is a set of points, referred to as \emph{centers}, satisfying (a) the \emph{covering} property, i.e., for every point $x \in M$ there exists a point $y \in Y_r$ within distance at most $2 \cdot (1+\epsilon)^{r}$ and (b) the \emph{separating} property, i.e.,  all distinct points $y,y' \in Y_r$ are at distance strictly larger than $2 \cdot (1+\epsilon)^{r}$. Restricting the set of possible radii to powers of $(1+\epsilon)$ in $(M,d)$ allows us to consider only $O(\epsilon^{-1}\cdot \log \Delta)$ different $r$-nets, where $\Delta$ is the \emph{aspect ratio}, defined as the ratio between the maximum and the minimum pair-wise distance in $(M,d)$. The union over all such $r$-nets naturally defines a hierarchy $\Pi$. It can be shown that the smallest $r$ in $\Pi$ such that the size of the $r$-net $Y_r$ is at most $k$ yields a feasible $k$-center clustering whose cost is within $(2+\epsilon)$ to the optimal one~(see e.g.,~\cite{ChanGS18}).
	
	A natural attempt to extend the above static algorithm to the incremental setting is to maintain
	the hierarchy $\Pi$ under insertions of points. In fact, Chan et al.~\cite{ChanGS18} follow this
	idea to obtain a simple incremental algorithm that has a linear dependency on the number of
	centers $k$. We show how to remove this dependency in metrics with bounded doubling dimension
	and maintain the hierarchy under deletion of points. Concretely, our algorithm employs
	\emph{navigating nets}, which can be thought of as a union over slightly modified $r$-nets with
	slightly larger constants in the cover and packing properties. Navigating nets were introduced
	by Krauthgamer and Lee~\cite{KrauthgamerL04} to build an efficient data-structure for the  nearest-neighbor search
	problem. We observe that their data-structure can be slightly extended to a dynamic algorithm
	for the $k$-center clustering problem that achieves an $8$-approximation with similar update time
	guarantees to those in~\cite{KrauthgamerL04}. Following the work of McCutchen and Khuller~\cite{MK08}, we 
	maintain a carefully defined collection of navigating nets, which in turn allow us to bring down the approximation 
	factor to $(2+\epsilon)$ while increasing the running time by a factor of $O(\epsilon^{-1} \ln \eps^{-1})$.
	
Similar hierarchical structures have been recently employed for solving the dynamic sum-of-radii clustering problem~\cite{HenzingerLM17} and the dynamic facility location problem~\cite{GoranciHL18}. In comparison to our result that achieves a $(2+\epsilon)$-approximation, the first work proves an approximation factor that has an exponential dependency on the doubling dimension while the second one achieves a very large constant. Moreover, while our data-structure supports arbitrary insertions of points, both works support updates only to a \emph{specific} subset of points in the metric space.

	\section{Preliminaries}\label{sec:prelim}
	In the $k$-center clustering problem, we are given a set $\ms$ of points equipped with some metric $d$ and an
	integer parameter $k > 0$. The goal is to find a set $C = \{c_1, \dots, c_k \}$ of $k$
	points (centers) so as to minimize the quantity $\phi(C) = \max_{x \in S} d(x,C)$, where $d(x,C) = \min_{c \in C}  d(x,c)$. 
	Let $\opt$ denote the cost of the optimal solution.

	In the $\emph{dynamic}$ version of this problem, the set $\ms$ evolves over
	time and queries can be asked. Concretely, at each timestep $t$, either a new point is added to
	$\ms$, removed from $\ms$ or one of the following queries is made for any given point $x \in M$: (i) decide whether $x$ is a center in the current solution, and (ii) find the center $c$ to which $x$ is assigned to. The goal is to maintain the set of centers $C$ after each client update so as to maintain a small factor approximation to the optimal solution. 

	Let $\dmin$ and $\dmax$ be lower and upper bounds on the minimum and the maximum distance between any
	two points that are ever inserted. 	
	For each $x \in \ms$ and radius $r$, 
	let $\ball{x}{r}$ be the set of all points in $M$ that are within distance $r$ from $x$, i.e., $\ball{x}{r} := \{ y \in \ms \mid d(x,y) \leq r \}$.

	The metric spaces that we consider throughout satisfy the following property. 

	\begin{definition}[Doubling Dimension]
		The doubling dimension of a metric space $(M,d)$ is said to be bounded
		by $\kappa$ if any ball $\ball{x}{r}$ in $(M,d)$ can be covered by $2^\kappa$
		balls of radius $r/2$.
	\end{definition}
\section{Fully dynamic \texorpdfstring{$k$}{k}-center clustering using navigating nets}\label{sec:navnetskcenter}
	In this section, we present a fully-dynamic algorithm for the $k$-center clustering problem that
	achieves a $(2+\eps)$-approximation with a running time not depending on the number of clusters
	$k$. Our construction is based on navigating
	nets of Krauthgamer and Lee~\cite{KrauthgamerL04} and a scaling technique of McCutchen and Khuller~\cite{MK08}. 
	
	We start by reviewing some notation from~\cite{KrauthgamerL04}.

	\paragraph*{$r$-nets and navigating nets.}
	Let $(M,d)$ be a metric space. For a given parameter $r>0$, a subset $Y \subseteq M$ is an $\emph{r-net}$ of $M$ if the following properties hold: 
	\begin{enumerate}
	\itemsep0em
		\item (separating) For every $x,y \in Y$ we have that $d(x,y) \geq r$ and, 
		\item (covering) $M \subseteq \bigcup_{y \in Y} B(y,r)$. 
	\end{enumerate}

	Let $\alpha > 1$ be a constant and let $\Gamma := \{\alpha^i : i \in \Z_{+} \}$ be a set of \emph{scales}. 
	Let $Y_r:= M$ for all $r \leq \dmin$, and for all $r \in \Gamma$, define $Y_r$ to be an $r$-net of
	$Y_{r/\alpha}$. A \emph{navigating net} $\Pi$ is defined as the union of all $Y_r$ for all $r \in
	\Gamma$. We refer to the elements in $Y_r$ as \emph{centers}.

	Note that for every scale $r > \dmax$ the set $Y_r$ contains only one element due to the
	separating property. 
	A navigating net $\Pi$ keeps track of (i) the smallest scale $r_{\max}$ 
	defined by $r_{\max} = \min\{r \in \Gamma \mid \forall r' \geq r, |Y_{r'}| =1 \}$, and (ii) the largest scale $r_{\min}$ defined by $r_{\min} = \max\{r \in \Gamma \mid  \forall r' \leq r, Y_{r'} = M\}$.
	All scales $r \in \Gamma$ such that $ r \in [r_{\min}, r_{\max}]$ are referred to as \emph{nontrivial} scales.

	\subsection{Navigating nets with differing base distances}\label{subsec:4eps:static}
	In what follows, we describe how to obtain a $(2+\eps)$-approximation for the
	$k$-center clustering problem by maintaining navigating nets in parallel. 
	This technique was originally introduced by McCutchen and Khuller~\cite{MK08} for improving the
	approximation ratio of the incremental doubling algorithm for the $k$-center problem due
	to Charikar et al.~\cite{CharikarCFM04}.

	The key idea behind the construction is that instead of maintaining one navigating net, we
	maintain $m$ navigating nets with differing base distances. The navigating nets differ
	\emph{only}	in the corresponding set $\Gamma$ which is used to define them.
	More concretely, for each integer $1 \leq p \leq m$, let $\hierarchy^p = \{ \alpha^{i+(p/m)-1} \mid i \in \Z_{+} \}$.
	
	Let $Y^p_r:= M$ for all $r \leq \dmin$ and for all $r \in \Gamma^p$, let $Y^p_r$ be an $r$-net of
	$Y^p_{r/\alpha}$. A \emph{navigating net} $\Pi^p$ is defined as the union
	over all $Y^p_r$ for $r \in
	\Gamma^p$. Similarly, we maintain $\rpmax$ and $\rpmin$, such that $\rpmax= \min\{r \in \hierarchy^p
		\mid \forall r' \geq r, |Y^p_{r'}| =1 \}$ and $\rpmin = \max\{r \in \Gamma^p \mid  \forall
		r' \leq r, Y^p_{r'} = M\}$, respectively. By definition of $\hierarchy^p$,
		there is an $\alpha^{j/m-1}$-net for all positive integers $j$.

	We next show how to maintain a $k$-center solution for the set of points $M$ using the family of navigating nets $\{\Gamma^{p}\}_{p=1}^{m}$. For each navigating net $1 \leq p \leq m$,
	define $i^* = i+(p/m)-1$ to be the index such that the $\alpha^{i^*}$-net
	$Y_{\alpha^{i^*}}^p$ has at most $k$ centers and $Y_{\alpha^{i^*-1}}^p$ has
	more than $k$ centers.  
	Define $\cost_p = \frac{\alpha}{\alpha-1} \alpha^{i^*}$ for all $1 \leq p
	\leq m$. 
	We compare the costs of all navigating nets and pick the navigating net
	$p^*$ with minimal cost $p^* = \arg \min_{1 \leq p \leq m} \cost_p$. The set of centers $Y_{\alpha^{i^*}}^{p^*}$ is the output $k$-center solution.
	
	The next lemma proves that every point $x \in \ms$ is within a distance
	$\cost_p=\frac{\alpha}{\alpha-1} \alpha^{i^*}$ of
	a center in $Y_{\alpha^{i^*}}^{p}$.

	\begin{lemma}\label{lem:geom:upperb}
		For $1 \leq p \leq m$ and $x \in \ms$ there is a center $c \in Y_{\alpha^{i^*}}^p$ such that
		$d(x,c)\leq \cost_p$.
	\end{lemma}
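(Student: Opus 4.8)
The plan is to climb the chain of nested nets inside $\Pi^p$, starting from a scale at which every point of $\ms$ is still a center and ending at the target scale $\alpha^{i^*}$, charging the distance incurred at each step of the climb to the corresponding scale, and bounding the total by a geometric series. Only the covering property of $r$-nets is used; the separating property plays no role in this direction.

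I would first dispose of the degenerate case $|\ms| \le k$, in which every point of $\ms$ is a center at every scale and the claim holds trivially with $c = x$. Assume from now on that $|\ms| > k$. Under this assumption I would check that $\alpha^{i^*} \ge \rpmin$: if instead $\alpha^{i^*} \le \rpmin$, then by the definition of $\rpmin$ we would have $Y^p_{\alpha^{i^*}} = \ms$, which has more than $k$ centers, contradicting the defining property of $i^*$. Writing $\rpmin = \alpha^{b}$ (note $\rpmin \in \Gamma^p$, and consecutive scales of $\Gamma^p$ differ by exactly a factor $\alpha$), we get $i^* = b + \ell$ for some integer $\ell \ge 0$, and the scales $\alpha^{b}, \alpha^{b+1}, \dots, \alpha^{b+\ell} = \alpha^{i^*}$ all lie in $\Gamma^p$.

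Since $Y^p_{\rpmin} = \ms$, we have $x \in Y^p_{\alpha^{b}}$. Now iterate the covering property: because $Y^p_{\alpha^{j+1}}$ is an $\alpha^{j+1}$-net of $Y^p_{\alpha^{j}}$, every point of $Y^p_{\alpha^{j}}$ lies within distance $\alpha^{j+1}$ of some point of $Y^p_{\alpha^{j+1}}$. Applying this for $j = b, b+1, \dots, i^*-1$ produces a sequence $x = c_{b}, c_{b+1}, \dots, c_{i^*}$ with $c_j \in Y^p_{\alpha^{j}}$ and $d(c_j, c_{j+1}) \le \alpha^{j+1}$. By the triangle inequality,
\[
  d(x, c_{i^*}) \;\le\; \sum_{j=b}^{i^*-1} d(c_j, c_{j+1}) \;\le\; \sum_{j=b+1}^{i^*} \alpha^{j} \;\le\; \alpha^{i^*}\sum_{t \ge 0} \alpha^{-t} \;=\; \frac{\alpha}{\alpha-1}\,\alpha^{i^*} \;=\; \cost_p,
\]
where $\alpha > 1$ is used to sum the geometric series. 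Setting $c := c_{i^*} \in Y^p_{\alpha^{i^*}}$ finishes the proof.

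I do not expect a genuine obstacle here; the only delicate points are the boundary behaviour of the hierarchy — making sure $\rpmin$ is well defined and that $\alpha^{i^*}$ does not drop below it (handled above), and the degenerate regime $|\ms| \le k$ — after which the argument is a short induction plus a geometric-series estimate.
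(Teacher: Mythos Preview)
Your proposal is correct and follows essentially the same route as the paper's own proof: start from a bottom scale at which every point of $\ms$ lies in the net, repeatedly apply the covering property to climb to the target scale $\alpha^{i^*}$, and bound the accumulated distance by the geometric series $\alpha^{i^*}\sum_{t\ge 0}\alpha^{-t}=\frac{\alpha}{\alpha-1}\alpha^{i^*}=\cost_p$. Your version is slightly more careful in handling the degenerate case $|\ms|\le k$ and in verifying that $\alpha^{i^*}$ lies at or above $\rpmin$, but these are bookkeeping details rather than a different idea.
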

	\begin{proof}
		By construction, the set $Y_{\alpha^{i^*}}^p$ is an $\alpha^{i^*}$-net of
		$Y_{\alpha^{i^*-1}}^p$ and all elements of $Y_{\alpha^{i^*-1}}^p$ are within distance 
		$\alpha^{i^*}$ to a center in $Y_{\alpha^{i^*}}^p$. Similarly, the
		elements of $Y_{\alpha^{i^*-2}}^p$ are
		within distance $\alpha^{i^*}+\alpha^{i^*-1}$ to a center in
		$Y_{\alpha^{i^*}}^p$ and so on. Note that the set
		$Y_{r^p_{\min}}^p$	contains all points currently in $M$ and thus the distance of every point in $M$ to some center in $Y_{\alpha^{i^*}}^p$ forms a geometric series. 
		
		Formally, let $x \in \ms$ be arbitrary and let $c$
		 be its ancestor in $Y_{\alpha^{i^*}}^p$.  Then the distance between $c$ and $x$ is bounded as follows:

		 \begin{align*} d(x,c) &\leq \alpha^{i^*} + {\alpha^{i^*-1}} + \alpha^{i^*-2} +
			 \cdots\\ &\leq \alpha^{i^*} \sum_{i=0}^{\infty}  \left({1 \over \alpha}\right)^i \\&=
			 \alpha^{i^*}\frac{\alpha}{\alpha-1} = \cost_p. \qedhere
 \end{align*}
	\end{proof}

	The above lemma shows an upper bound for the output $k$-center solution $Y_{\alpha^{i^*}}^{p^*}$, i.e.,
	$\phi(Y_{\alpha^{i^*}}^{p^*}) \leq
	\cost_{p^*}$. The next lemma proves that $\cost_{p^*}$ has the desired approximation guarantee, i.e., $\cost_{p^*} \leq (2+\eps) \OPT$. 

	\begin{lemma}\label{2eps:correctness}
		If $\alpha = O(\eps^{-1})$ and $m = O(\eps^{-1} \ln \eps^{-1})$ then
		$\cost_{p^*} \leq (2+\eps) \OPT$.
	\end{lemma}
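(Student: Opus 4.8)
The plan is to locate a single navigating net, say with index $p_0$, whose transition scale $\alpha^{i^*_{p_0}}$ exceeds $2\OPT$ by at most a factor of $\alpha^{1/m}$. Granting that, \Cref{lem:geom:upperb} together with the optimality of $p^*$ immediately gives
\[
 \cost_{p^*}\;\le\;\cost_{p_0}\;=\;\frac{\alpha}{\alpha-1}\,\alpha^{i^*_{p_0}}\;\le\;\frac{\alpha}{\alpha-1}\,\alpha^{1/m}\cdot 2\OPT ,
\]
and it then remains only to choose $\alpha=\Theta(\eps^{-1})$ and $m=\Theta(\eps^{-1}\ln\eps^{-1})$ so that the leading constant $\tfrac{\alpha}{\alpha-1}\,\alpha^{1/m}$ drops below $1+\eps/2$.

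First I would extract two one-line facts that use nothing but the net axioms. On the one hand, if $|Y^p_r|>k$ then the separating property exhibits $k+1$ points pairwise at distance $\ge r$; any $k$ centers must assign two of these points to the same center, so the triangle inequality forces $\OPT\ge r/2$. On the other hand, if $r>2\OPT$ then $|Y^p_r|\le k$, since any two points of a common optimal cluster are within $2\OPT<r$ of each other and hence cannot both be net points, so each of the $k$ optimal clusters contains at most one point of $Y^p_r$.

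Next I would combine these with the mere \emph{definition} of $i^*_p$. For every $p$ the scale $\alpha^{i^*_p-1}$ carries more than $k$ centers by definition, so $\alpha^{i^*_p-1}\le 2\OPT$ by the first fact. Let $s^+$ be the smallest scale appearing in $\bigcup_{q=1}^{m}\hierarchy^{q}$ that is strictly larger than $2\OPT$, i.e.\ the smallest $\alpha^{j/m-1}>2\OPT$, and let $p_0$ be an index with $s^+\in\hierarchy^{p_0}$. Since $\hierarchy^{p_0}$ is geometric with ratio $\alpha$ and contains both $\alpha^{i^*_{p_0}-1}\le 2\OPT$ and $s^+>2\OPT$, the element $\alpha^{i^*_{p_0}-1}$ can be no larger than the $\hierarchy^{p_0}$-predecessor of $s^+$, so $\alpha^{i^*_{p_0}-1}\le s^+/\alpha$ and thus $\alpha^{i^*_{p_0}}\le s^+$. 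Finally, $\bigcup_q\hierarchy^q$ is itself a geometric progression of ratio $\alpha^{1/m}$, so the predecessor of $s^+$ in it is $\le 2\OPT$, giving $s^+\le\alpha^{1/m}\cdot 2\OPT$ and hence $\alpha^{i^*_{p_0}}\le\alpha^{1/m}\cdot 2\OPT$, as wanted. (The only exceptional case, where $s^+$ falls below the bottom scale of every $\hierarchy^q$, forces $\OPT$ to be within a constant factor of $\dmin$ and can be dealt with separately.) For the last step I would use $\tfrac{\alpha}{\alpha-1}\le 1+2/\alpha$ and $\alpha^{1/m}=e^{(\ln\alpha)/m}\le 1+2(\ln\alpha)/m$ (valid once $m\ge\ln\alpha$), take $\alpha$ a large enough multiple of $\eps^{-1}$ that $2/\alpha\le\eps/8$, and then $m$ a large enough multiple of $\eps^{-1}\ln\eps^{-1}$ that $2(\ln\alpha)/m\le\eps/8$, so that $\tfrac{\alpha}{\alpha-1}\alpha^{1/m}\le(1+\eps/8)^2\le 1+\eps/2$ and therefore $\cost_{p^*}\le(2+\eps)\OPT$.

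I expect the only genuinely delicate point to be that the sizes $|Y^p_r|$ are not guaranteed to be monotone in $r$ for navigating nets, so one cannot simply define $\alpha^{i^*_p}$ as ``the smallest scale with at most $k$ centers'' and argue from there; the argument above sidesteps this by using only the one-sided guarantee that is literally part of the definition of $i^*_p$ (namely that $\alpha^{i^*_p-1}$ has more than $k$ centers) together with the upper bound on net size from the second fact. Everything else is bookkeeping with the geometric series from \Cref{lem:geom:upperb} and with the interleaving of the $m$ shifted scale sets $\hierarchy^q$ into one progression of ratio $\alpha^{1/m}$.
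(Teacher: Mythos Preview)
Your argument is correct and lands on exactly the same inequality as the paper, namely
\[
\cost_{p^*}\;\le\;\frac{2\alpha}{\alpha-1}\,\alpha^{1/m}\cdot\OPT,
\]
after which the parameter choice is routine. The two proofs are mirror images of one another: the paper starts from the minimizer $p^*$ with scale $\alpha^{j/m-1}$, looks one step down in the merged progression to $\hat\alpha=\alpha^{(j-1)/m-1}$, argues $|Y^{\hat p}_{\hat\alpha}|>k$ by optimality of $p^*$, and concludes $2\OPT\ge\hat\alpha$; you start from $2\OPT$, look one step up in the merged progression to $s^+$, and exhibit a net $p_0$ with $\alpha^{i^*_{p_0}}\le s^+$. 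Both rely on the same two ingredients (the packing bound $|Y_r|>k\Rightarrow 2\OPT\ge r$ and the fact that $\bigcup_q\hierarchy^q$ is a geometric progression of ratio $\alpha^{1/m}$), so I would call this essentially the same approach.

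One small correction to your closing remark: in this paper's setting the sizes $|Y^p_r|$ \emph{are} monotone in $r$, because each $Y^p_r$ is defined as an $r$-net of $Y^p_{r/\alpha}$ and an $r$-net of a set is by definition a subset of it; hence $Y^p_r\subseteq Y^p_{r/\alpha}$ for every $r\in\hierarchy^p$. Your detour around the perceived non-monotonicity is therefore unnecessary (though harmless), and the paper's shorter argument---that $|Y^{\hat p}_{\hat\alpha}|\le k$ would force $\cost_{\hat p}<\cost_{p^*}$---is in fact available to you as well.
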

	\begin{proof}
		We set $p^* \gets \arg \min_{1 \leq p \leq m} \cost_p$, where $\cost_{p^*} =
		\frac{\alpha}{\alpha -1} \alpha^{i^*}=
		\frac{\alpha}{\alpha -1}\alpha^{j/m -1}$ for some $j\in \Z$. 
		For comparison, consider level $\hat{\alpha} = \alpha^{(j-1)/m-1}$ and the corresponding
		$\hat{\alpha}$-net $Y^{\hat{p}}_{\hat{\alpha}}$. 
		Note that we returned $Y_{\alpha^{i^*}}^{p^*}$ 
		instead of $Y_{\hat{\alpha}}^{\hat{p}}$ as a solution even though $\alpha_{i^*} > \hat{\alpha}$.
		Consequently, $|Y_{\hat{\alpha}}^{\hat{p}}| > k \geq  |Y_{\alpha_{i^*}}^{p^*}|$.
		Because $|Y^{\hat{p}}_{\hat{\alpha}}|
		> k$, at least two points $c_1,c_2 \in Y^{\hat{p}}_{\hat{\alpha}}$ are assigned to the
		same center $c^*$ in the optimal solution. By the separation property we get that
		$d(c_1,c_2) \geq \hat{\alpha}$.
		Using the triangle inequality we obtain
		$$2\OPT \geq d(c_1,c^*) + d(c^*,c_2)  \geq d(c_1,c_2) \geq \alpha^{(j-1)/m-1}$$
		and thus $\OPT \geq \alpha^{(j-1)/m-1}/2$.
		To obtain the desired approximation we compare our result with
		$\cost_{p^*}$:
		\begin{align*}
			{\cost_{p^*} \over \OPT} & \leq { \frac{\alpha}{\alpha -1}\alpha^{j/m -1} \over \alpha^{(j-1)/m-1}/2 } \\  & =  { 2\alpha^{j/m+1} \over (\alpha-1) \cdot 
				\alpha^{(j-1)/m} } \\ &= {2\alpha^{(j-1)/m} \cdot \alpha^{1/m+1} \over (\alpha-1) \cdot
				\alpha^{(j-1)/m} } \\ &= {2 \alpha \over (\alpha -1)}\sqrt[m]{\alpha} .
		\end{align*}
		It remains to show that
		$2{\alpha \over (\alpha -1)}\sqrt[m]{\alpha} \leq 2(1+\eps) \cdot (1+\eps)$. 
		Set $\alpha = {2/\eps}$. Clearly, $\alpha = O(\eps^{-1})$ and
		${\alpha \over \alpha-1} = 1 + {\eps \over 2-\eps} \leq 1 +\eps$ because $0 < \eps \leq 1$.
		Moreover note that
			$\alpha^{1/m} \leq (1+\eps) \text{ iff } 1/m \log_{1+\eps} \alpha \leq 1$. The latter holds 
		for  any $m \geq \eps^{-1} \ln 2 + \eps^{-1}
		\ln \eps^{-1}$, which in turn implies that $m = O(\eps^{-1} \ln \eps^{-1})$.\qedhere
	\end{proof}

	\subsection{Fully dynamic \texorpdfstring{$k$}{k}-center clustering}
	In this section, we present the details of the data structure presented in
	Section~\ref{subsec:4eps:static}. 

	\paragraph*{Data structure.} Our data-structure needs to (1) maintain $m$ navigating
	nets and (2) answer queries about our current solution to the given
	$k$-center clustering problem. 

	For (1) we use the data structure described
	in~\cite{KrauthgamerL04}: Let $1 \leq p \leq m$ and $\alpha^i \in \Gamma^p$: For the navigating
	net $\Pi^p$ we do not store the sets $Y^p_{\alpha^i}$ explicitly.
	Instead, for every nontrivial scale $\alpha^i \in \Gamma^p$ and every $x \in Y^p_{\alpha^i}$ we store the
	\emph{navigation list} $L^p_{x,\alpha^i}$ which contains nearby points to $x$ in the
	$\alpha^{i-1}$-net $Y^p_{\alpha^{i-1}}$, i.e., $L^p_{x,\alpha^i} = \{z \in
	Y^p_{\alpha^{i-1}} : d(z,x) \leq \psi \cdot \alpha^i \}$ where
	$\psi \geq 4$. Additionally, for each $x \in M$ and each $1 \leq p \leq m$, we store the largest
	scale $\beta \in \Gamma^p$ such that 
	$L^p_{x,\beta} = \{x\}$ but we do not store any navigation list $L^p_{x,\alpha}$ where $\alpha \in \Gamma^p$ and
$\alpha < \beta$.
	

	For (2), we also maintain the reverse information. Specifically, for every $x$ in $M$
	and nontrivial scale $\alpha^i$ we maintain $M_{x,\alpha^i}^p$ which
	contains all the points in the $\alpha^{i+1}$-net
	$Y^p_{\alpha^{i+1}}$ whose navigation list contains $x$, i.e.,
	$M^p_{x,\alpha^i} = \{ y \in Y^p_{\alpha^{i+1} }: x
	\in L^p_{y,{\alpha^{i+1}}} \}$. We maintain each $M^p_{x,\alpha^i}$ in a min-heap data structure, where each
	element $y \in M^p_{x,\alpha^i}$ is stored with the distance $d(x,y)$.
	It is well known that constructing such a min-heap takes $O(|M^p_{x,\alpha^i}|)$ time and the
	insert and delete operations can be supported in logarithmic time in the size of $M^p_{x,\alpha^i}$. 
	Let $y$ be the closest point to $x$ in $M^p_{x,\alpha^i}$.
	The min-heap allows us to extract $y$ in $O(1)$ time. 
	Note that due to the covering property the closest point to 
	$y$ is also the closest point to $x$ in $Y^p_{\alpha^{i+1}}$.

	Additionally, we maintain a counter $c_{\alpha^i}^p = |Y^p_{\alpha^i}|$
	for each scale $\alpha^i \in \Gamma^{p}$ and navigating net $1 \leq p \leq m$.
	Also, for each navigating net $1 \leq p \leq m$, we maintain the largest scale $\alpha^{i^*}$ such that 
	$c^p_{\alpha^{i^*}} \leq k$ and $c^p_{\alpha^{i^*+1}} > k$. 
	We store $\cost_{p^*} = \min_{1 \leq p \leq m}
	\frac{\alpha}{\alpha-1}\alpha^{i^*}$ and $p^* = \arg \min_{1 \leq p \leq 
	m} \cost_p$. 

	\paragraph*{Preprocessing.}
	Consider the construction of a single navigating net $\Pi^p$. We start by 
	inserting the $|M|$ points using the routine described in~\cite{KrauthgamerL04}[Chapter~2.5] whose running time is $O(2^{O(\dd)}  \logDloglogD)$.
	Additionally we construct the lists $M^p_{x,\alpha^i}$ for every $1 \leq p \leq m$, $x \in M$
	and scale $\alpha^i$. We do this during the insert operation which takes care of the lists $L^p_{x,\alpha^i}$.
	Due to Lemma~2.2 in~\cite{KrauthgamerL04} every navigation list has size $O(2^{O(\dd)})$ and due
	to Lemma~2.3 in~\cite{KrauthgamerL04} every navigating net has only $\log \Delta$ nontrivial scales.
	Consequently, the sum of all navigation lists in a navigating net $\Pi^p$ is of size $\sum_{x,\alpha^i}
	|L^p_{x,\alpha^i}| = O(|M| 2^{O(\dd)} \log \Delta)$. Notice that
	$\sum_{x,\alpha^i} |L^p_{x,\alpha^i}| = \sum_{x,\alpha^i} |M^p_{x,\alpha^i}|
	$ because the sets $M^p_{x,\alpha^i}$ store the reverse
	information of the sets $L^p_{x,\alpha^i}$. Since there are $m=O(\epsilon^{-1} \ln \eps^{-1})$ navigating nets, the latter yields a construction time of $O(|M| 2^{O(\dd)}\logDloglogD \cdot \eps^{-1} \ln \eps^{-1})$.

	\paragraph*{Handling Point Updates and Queries.}
	To handle point insertions and deletions in the $m$ navigating nets, we invoke the
	routines described in~\cite{KrauthgamerL04}[Chapters~2.5-2.6] for all the navigating
	nets.
	We also keep track of the counters $c^p_{\alpha^i}$ and sets
	$M^p_{x,\alpha^i}$ when we handle the insertion and deletions of points in
	the navigating nets. While updating the
	counters $c^p_{\alpha^i}$ we simultaneously keep track of $\alpha^{i^*}$ for
	all navigating nets and maintain $p^*$. 
	
	We next discuss the query operations that our data-structure supports. First, we answer the query whether a given point $x \in M$ is a center by simply checking if the list
	$L^{p^*}_{x,\alpha^{i^*}}$ exists. Second, given a point $x \in M$ we return its corresponding center in $Y_{\alpha^{i^*}}^{p^*}$
	as follows: First we check if $x$ is a center. 
	If not, we consider $L^{p^*}_{x,\beta} = \{x \}$. Note that $\beta = \alpha^i$ for some $i$. 
	Then we repeatedly determine the navigation list $L^{p^*}_{y',\alpha^{i+1}}$ where
	$y'$ is the center in
	$Y^p_{\alpha^{i+1}}$ which
	contains $x$ within radius $\alpha^{i+1}$ using the min-heap
	$M^{p^*}_{x,\alpha^{i+1}}$. Then we set $i=i+1$ until $i = i^{*}-1$. Once we arrive at the list
	$L^{p^*}_{y'',\alpha^{i^{*}-1}}$ we return $y''$ as the center $x$ is assigned to. 
	
	The correctness of the maintained hierarchies follows from the correctness in~\cite{KrauthgamerL04}.
	Due to Lemma~\ref{2eps:correctness} the set
	$Y^{p^*}_{\alpha^{i^{*}}}$ is a feasible solution to the $k$-center problem whose cost is guaranteed to be within $(2+\eps)$ times the optimum cost. 

	We finally analyze the running time of the update and query operations.
	The time for handling a point insertion and a point deletion in a single navigating net is $O(2^{O(\dd)} \logDloglogD)$~(Theorem~2.5 in~\cite{KrauthgamerL04}). Since we maintain $m=O(\eps^{-1} \ln \eps^{-1})$ navigating nets,  the overall time to handle a point insertion or deletion is $O(2^{O(\dd)} \logDloglogD \cdot \eps^{-1} \ln \eps^{-1})$.  It is
	straightforward to see that maintaining the counters $c^p_{\alpha^i}, \alpha^{i^*},p^*,\beta$ and min-heaps
	$M^p_{x,\alpha^i}$ in all navigating nets can also be done in the same time per update. 
	Determining if a point $x \in M$ is a center can be done in $O(1)$. 	Determining the center of a given point $x \in M$ takes $O(\log \Delta)$ time because
	there are $O(\log \Delta)$ nontrivial scales (Lemma~2.3 in~\cite{KrauthgamerL04}) and thus there are $O(\log \Delta)$
	iterations in the lookup algorithm until the scale $\alpha^{i^*}$ is reached. 
	
	Combining the above guarantees yields Theorem~\ref{thm: mainThm1}.


\section{Empirical Analysis}
	In this section, we present the experimental evaluation for our $k$-center algorithm. 
	We implemented the algorithm described in the previous sections using cover trees~\cite{BKL2006,Kollar2006FASTNN}, which is a fast variant of navigating nets. The cover tree maintains the same invariants as navigating nets, except that for a point at a certain level in the hierarchy, we store \emph{exactly one} nearby point one level up, instead of a set of points that are nearby. \cite{BKL2006} show that all running time guarantees can be maintained for metric spaces with bounded expansion constant. This in turn implies that using a collection of cover trees yields a $(2+\epsilon)$-approximation for the $k$-center clustering problem. The running
	time for an insertion/deletion of a point in a cover tree is in $O(c^6 \ln |\ms|)$ where $c$ is the expansion constant of
	$\ms$~\cite{BKL2006}\footnote{The expansion constant of $\ms$ is defined as the smallest value $c
		\geq 2$ such that $|\ball{p}{2r}| \leq c |\ball{p}{r}|$ for all $p \in \ms$ and $r > 0$.}. 
	Our algorithm maintains $O(\epsilon^{-1} \ln \eps^{-1})$ cover trees. To obtain the current centers of a
	cover tree, we traverse the tree top-down and add all distinct points until we have $k$ points. Due to
	the \emph{nesting property} of the cover tree, i.e., every point which appears in some
	level $i$ appears in every lower level $j<i$ in the tree~\cite{BKL2006}
	we are guaranteed to add all nodes of the desired level $Y^p_{i^*}$ described in
	Section~\ref{subsec:4eps:static}.
	From now on we call the described algorithm $\ACov$.\footnote{Source code and data sets:~\url{http://bit.ly/2S4WvJL}}
	
	We compare our algorithm against the algorithm 
	of Chan et al.~\cite{ChanGS18} which is the state-of-the-art approach for the fully dynamic $k$-center
	problem in practice.

	\paragraph{The algorithm of Chan et al.~\cite{ChanGS18}.}
	To gain some intuition into the state-of-the art algorithm in practice, we give a brief
	summary of the algorithm described in Chan et al.~\cite{ChanGS18}:
	The algorithm maintains a clustering 
	for each $r \in \Gamma:=\{(1+\eps)^i : \dmin \leq (1+\eps)^i \leq \dmax, i \in
	\mathbb{N} \}$.  
	Their algorithm is a $(2+\eps)$-approximation of the optimal
	solution and has an average running time of $O(k^2 \cdot \frac{\log(\ar)}{\eps})$ per
	update. Note that the algorithm needs $\dmin$ and $\dmax$ as input and that these values are
	usually not available in practice. In contrast, $\ACov$ does not need
	these parameters. For our empirical analysis we provided these special
	parameters to the algorithm of Chan et al.~\cite{ChanGS18}. 
	For arbitrary instances one would initialize $\dmin,\dmax$ with the 
	minimum/maximum value for the type double respectively 
	to guarantee the correctness of their algorithm. From now on, we call their
	algorithm $\AChan$.


	\paragraph{Setup.}
	We implemented the cover tree in C++ and compiled it with
	g++-7.4.0. We executed all of our experiments on a Linux machine running on
	an AMD Opteron Processor 6174 with 2.2GHz and 256GB of RAM. 
	In our experiments we evaluate $\AChan$
	and $\ACov$ with the following pairwise combinations of $\epsilon \in \{0.1, 0.5, 1,4\}$
	and $k \in \{20, 50, 100, 200\}$. In total, we perform 10 different runs for each test instance
	and compute the arithmetic mean of the solution improvement and speedup on this instance. When further averaging over
	multiple instances, we use the geometric mean in order to give every instance a comparable
	influence on the final score. 
	To measure the solution quality of an algorithm at any timepoint $i$ we query for the current set of
	centers $C_i$. We do not directly compute the objective function value $\phi(C_i)$, since this is an expensive
	operation and it is not usually needed in practice.
	After the termination of the two algorithms we compute the	objective function of the $k$-center
	solution $\phi(C_i)$ in order to compare the solutions of the two competing algorithms $\ACov$
	and $\AChan$. Hence, the running times of both algorithms include the time to
	perform the point insertions/deletions and the queries (obtaining the centers of the solution), but not computing
	the objective function.

	\paragraph{Instances and Update Sequences.}
	To compare the performance of the two algorithms, we use the instances of Chan et al.~\cite{ChanGS18} with euclidean distance and add an additional random instance.
	\begin{itemize}
		\itemsep0.01mm
		\item \emph{Twitter.} The twitter data set~\cite{chan_github} is introduced
			in~\cite{ChanGS18} and consists of
			 21 million geotagged tweets.  Our
			experiments consider only the first 200k tweets without duplicates. 
		\item \emph{Flickr.} The Yahoo Flickr Creative Commons 100 Million (YFCC100m)
			dataset~\cite{yfcc100m} contains the metadata of 100 million pictures posted on Flickr. Unfortunately, we were not
			able to obtain the full dataset but used a search engine to build a subset of
			the dataset~\cite{flickr_searchengine}. This subset entails 800k points with longitude and latitude. 
		\item \emph{Random.} This dataset consists of 2 million points created as follows: 
			First, we sampled $100$ points $(x,y)$ uniformly at random for $-1 \leq x,y \leq 1$.
			Then, for each such point $(x,y)$, we sampled another $20000$ points using a normal
			distribution with $(x,y)$ as mean and a variance of $0.001$ respectively. 
	\end{itemize}

	We use the following update sequences on the data sets inserting at most 200k points:
	\begin{itemize}
		\itemsep0.01mm
		\item \emph{Sliding Window.} In the sliding window query, a point is inserted at some point
			in time $t$ and will be removed at time $t+W$ where $W$ is the window size. We chose a
			sliding window of size 60k following the implementation of Chan et al. During the
			update sequence we perform a query every 2000 insertions. Therefore, we
			perform 100 queries in total. 

		\item \emph{Random Insertions/Deletions.} We further distinguish between three concrete types
			of update sequences with 30\% deletions, 10\% deletions and 5\% deletions. Points are inserted
			uniformly at random and deleted uniformly at random from the set of points already inserted. 
			The chance to perform a query is 0.05\%. 
			The chance to insert a point at any given timestep is given by
			1 - the respective deletion percentage above - 0.0005.
	\end{itemize}

	\paragraph{Results and Interpretation.}
	We now evaluate the performance of $\ACov$ and compare it $\AChan$. 
	In Table~\ref{tab:running_epsfixed} we
	present the geometric mean speedup of $\ACov$ over $\AChan$. Here, both algorithm use the same
	parameter $\epsilon$ and have the same number of centers $k$. First of all, note that the empirical results
	\emph{reinforce} the
	theoretical results: The larger $k$ and $\eps$ are in our experiments, the larger the speedups of $\ACov$ become when compared to the algorithm $\AChan$.
	The running time of our algorithm $\ACov$ does not depend on $k$
	whereas in contrast each updates of $\AChan$ depends \emph{quadratically} on $k$ on average. 
	Moreover, speedups improve for larger values of $\eps$ since the running time of
	$\ACov$ has a multiplicative factor of $O(\eps^{-1} \ln \eps^{-1})$ and
	$\AChan$'s running time includes a better factor $O(\eps^{-1})$.
	For example, when $k$ is as large as $200$,  $\ACov$ is faster than $\AChan$ for all values of
	$\epsilon$. In contrast, when $\eps=1$, $\ACov$ has better speedups than $\AChan$ already for small values
	of $k$ like $k=50$.  
	When $k = 20$, $\AChan$ is faster than $\ACov$. 

	We proceed to compare the solution quality when both algorithm use the \emph{same} parameter $\epsilon$ and also use the same number of centers $k$.
	In Table~\ref{tab:solv_epsfixed} we present the geometric mean solution
	improvement of $\ACov$ over $\AChan$ for this case. $\ACov$ gives better solutions for
	\emph{all}
	instances as soon as $\epsilon \geq 0.5$. Generally speaking, the larger $\epsilon$ gets, the larger is our improvement in
	the	solution: For $\epsilon=0.5$ our algorithm gives 10-12\%
	better solutions. Setting $\epsilon = 1$ we already obtain 12-36\% better solutions and finally,
	when setting $\epsilon = 4$ we obtain 7-114\% better solutions.
	For $\epsilon=0.1$ our solutions are about 3-4\% worse than the
	solutions of $\AChan$. We conclude that our algorithm has a significant advantage in running
	time \emph{and} solution quality for slightly larger values of $k$ and $\eps$.
		\begin{table}[t!]
		\centering
		\caption{Top: Geometric mean speedup of $\ACov$ over $\AChan$. Bottom: Geometric mean improvement in solution quality of $\ACov$ over $\AChan$. For every entry both algorithms use the same $\eps$ and $k$. Higher is better.}\label{tab:running_epsfixed}
\label{tab:solv_epsfixed}
		\begin{tabular}{lrrrr}
			\toprule
			$\epsilon$ &       0.1 &       0.5 &        1.0 &        4.0 \\
			$k$   &           &           &            &            \\
			\midrule
			20  &  0.02 &  0.14 &   0.32 &   0.72 \\
			50  &  0.10 &  0.59 &   \textbf{1.34} &   \textbf{3.05} \\
			100 &  0.33 &  \textbf{2.01} &   \textbf{4.45} &  \textbf{10.32} \\
                        200 &  \textbf{1.15} &  \textbf{7.66} &  \textbf{17.74} &  \textbf{39.60} \\
\midrule
			20  &  0.97 &  \textbf{1.12} &  \textbf{1.27} &  \textbf{1.07} \\
			50  &  0.97 &  \textbf{1.10} &  \textbf{1.36} &  \textbf{1.46} \\
			100 &  0.96 &  \textbf{1.12} &  \textbf{1.12} &  \textbf{2.14} \\
			200 &  0.96 &  \textbf{1.12} &  \textbf{1.19} &  \textbf{1.28} \\
			\bottomrule

		\end{tabular}
		\centering
\caption{Top: Geometric mean speedup over $\AChan$ when fixing $\epsilon = 1$ for $\ACov$. Bottom: Geometric mean improvement in solution quality when fixing $\epsilon = 1$ for our algorithm $\ACov$. Higher is better.}\label{tab:solv_eps1}\label{tab:running_eps1}
\vspace{0.3cm}
		\begin{tabular}{lrrrr}
			\toprule
			$\epsilon$ &       0.1 &       0.5 &       1.0 &       4.0 \\
			$k$   &           &           &           &           \\
			\midrule
			20  &\textbf{    2.48} &   0.55 &   0.32 &  0.14 \\
			50  &\textbf{   10.01} &  \textbf{ 2.27} &\textbf{   1.34} &  0.62 \\ 100 &\textbf{   32.17} &\textbf{   7.51} &\textbf{   4.45} &\textbf{ 2.08} \\ 200 &\textbf{  130.01} &\textbf{  29.60} &\textbf{  17.74} &\textbf{
				8.35} \\

	\midrule
			20  &  0.91 &  \textbf{1.08} &  \textbf{1.27} &  \textbf{1.18} \\
			50  &  0.90 &  \textbf{1.05} &  \textbf{1.36} &  \textbf{1.72} \\
			100 &  0.89 &  \textbf{1.06} &  \textbf{1.12} &  \textbf{2.52} \\
			200 &  0.88 &  \textbf{1.06} &  \textbf{1.19} &  \textbf{1.54} \\

			\bottomrule
		\end{tabular}
		
		\centering
		\caption{Top: Geometric mean speedup over $\AChan$ when fixing $\epsilon = 4$ for $\ACov$. Bottom: Geometric mean improvement in solution quality when fixing $\epsilon = 4$ for our algorithm $\ACov$. Higher is better.}\label{tab:running_eps4}
\label{tab:solv_eps4}		
\vspace{0.3cm}		
		\begin{tabular}{lrrrr}
			\toprule
			$\epsilon$ &         0.1 &         0.5 &        1.0 &        4.0 \\
			$k$   &             &             &            &            \\
			\midrule
			20  &\textbf{   12.09} &    2.70 &   1.58 &   0.72 \\
			50  &\textbf{   48.69} &   11.07 &   6.54 &   3.05 \\
			100 &\textbf{  159.11} &   37.17 &  22.03 &  10.32 \\
			200 &\textbf{  616.51} &  140.38 &  84.13 &  39.60 \\
			\midrule
			20  &  0.83 &  0.98 &  1.16 &  1.07 \\
			50  &  0.76 &  0.89 &  1.16 &  1.46 \\
			100 &  0.76 &  0.90 &  0.95 &  2.14 \\
			200 &  0.74 &  0.88 &  0.99 &  1.28 \\
			\bottomrule
		\end{tabular}
	\end{table}

    We now fix the value of $\eps$ in our algorithm to $1$ and $4$ and compare it with $\AChan$ for all values of $\eps$.
	Table~\ref{tab:running_eps1} presents the geometric mean speedup of the results 
	and  the geometric mean improvement in solution quality for the case that we fix $\eps=1$ in our algorithm. Notice that we
	obtain a speedup of at least one order of magnitude when $k\geq 50$ comparing to $\AChan$ with $\eps =
	0.1$ while sacrificing only 9-12\% in solution quality over $\AChan$.  Most significantly, $\ACov$ is faster
	than $\AChan$ with $\eps = 0.5$ and
	$k\geq 50$ while also obtaining \emph{better} solution quality. 
	Similarly, we set $\eps = 4$ for $\ACov$ and compare the results to $\AChan$ for all values of $\epsilon$ again.
	The resulting geometric mean speedups  and the geometric mean
	solution improvement is presented in Table~\ref{tab:solv_eps4}.
	When comparing to $\AChan$ with $\eps=0.1$ we obtain speedups of one order when $k\leq 50$ and
	two orders when $k \geq 100$ while sacrificing at most 26\% of the solution quality.

\section{Conclusion}
We developed a fully dynamic $(2+\eps)$ approximation algorithm for k-center
clustering with running time independent of $k$, the number of centers. 
Our algorithm maintains multiple hierarchies (so called navigating nets), so that each hierarchy
stores sets of points which evolve over time through deletions and insertions.
Roughly speaking, each of these hierarchies maintains the property that points residing on the same level are at least separated by a specific distance. This allows us to
obtain k-center solutions with an approximation of $(2+\eps)$. Maintaining the navigating nets can be done in 
time independent of $k$.
Lastly, we conducted an extensive evaluation of this algorithm which indicates that our algorithm outperforms the state-of-the-art
algorithms for values of $k$ and $\eps$ suggested by theory. In this case, our algorithm obtains significant speedups and improvements in solution quality.
Important future work includes parallelization of the two algorithms as well as
implementing the streaming algorithms in \cite{SchmidtS19,MK08} and \cite{CharikarCFM04}.   
\bibliography{cluster}
\end{document}